\theoremstyle{definition}
\newtheorem*{proposition*}{Proposition}
\theoremstyle{plain}
\theoremstyle{definition}
\declaretheorem[name=Lemma]{lemma}
\declaretheorem[name=Definition]{definition2}
\newcommand{\Var}[0]{\textnormal{Var}}
\DeclareMathOperator*{\Tr}{Tr}
\DeclareMathOperator*{\Span}{Span}
\DeclareMathOperator{\MU}{MU}
\DeclareMathOperator{\image}{Im}
\DeclareRobustCommand{\openzero}{\leavevmode\hbox{0\kern-.55em0}}
\newcommand{\bs}{\boldsymbol}
\newcommand{\D}[1]{\mathcal D_{\mathbb B_{#1}}}
\newcommand{\maj}[2]{\mathbb B_{#1} \succ ^{\mathbb B_0} \mathbb B_{#2}}
\newcommand{\majj}[2]{{\mathbb {#1} \succ\!\!\succ ^{\mathbb B_0} \mathbb {#2}}}
\newcommand{\po}{``$\;\succ ^{ \mathbb B_0}$''}
\newcommand{\poo}{``$\;{\succ\!\!\succ ^{ \mathbb B_0}}$''}
\newcommand{\cre}[1]{c^{(\text{rel})}_{\mathbb B_{#1}}}
\newcommand{\cg}[1]{{c}_{\,\mathbb B_{#1}}}
\newcommand{\prlsection}[1]{{\em {#1}.---~}}
\newcommand{\xMapsto}[2][]{\ext@arrow 0599{\Mapstofill@}{#1}{#2}}
\def\Mapstofill@{\arrowfill@{\Mapstochar\Relbar}\Relbar\Rightarrow}
\begin{document}

\title{Quantifying the incompatibility of quantum measurements relative to a basis}

\date{\today}

\author{Georgios Styliaris}
\affiliation{Department of Physics and Astronomy, and Center for Quantum Information Science and Technology, University of Southern California, Los Angeles, California 90089-0484}

\author{Paolo Zanardi}

\affiliation{Department of Physics and Astronomy, and Center for Quantum Information
Science and Technology, University of Southern California, Los Angeles,
California 90089-0484}

\begin{abstract}

Motivated by quantum resource theories, we introduce a notion of incompatibility for quantum measurements relative to a reference basis. The notion arises by considering states diagonal in that basis and investigating whether probability distributions associated with different quantum measurements can be converted into one another by probabilistic post-processing. The induced preorder over quantum measurements is directly related to multivariate majorization and gives rise to families of monotones, i.e., scalar quantifiers that preserve the ordering. For the case of orthogonal measurement we establish a quantitative connection between incompatibility, quantum coherence and entropic uncertainty relations. We generalize the construction to arbitrary POVM measurements and report complete families of monotones.

\end{abstract}

\maketitle

\prlsection{Introduction}
One of the cornerstones of quantum theory is the concept of incompatibility between observables \cite{bohm2012quantum}. A pair of quantum observables is deemed incompatible if the corresponding self-adjoint operators fail to commute. Operationally, incompatibility implies that there exist pure quantum states for which it is impossible to simultaneously predict with certainty the measurement outcomes of two incompatible observables.
Finite-dimensional observables that share the same eigenbasis are fully compatible, while any pair of observables associated with bases that are mutually unbiased
are maximally incompatible: certain knowledge for the outcome of one assures complete randomness for the possible outcomes of the other.

Incompatibility is famously captured through uncertainty relations, that may involve variances \cite{PhysRev.34.163,Schroedinger,PhysRevLett.113.260401}, entropies \cite{PhysRevLett.50.631,PhysRevLett.60.1103,christandl2005uncertainty,PhysRevLett.108.210405,PhysRevA.89.042122,RevModPhys.89.015002,halpern2018reconciling} or other information-theoretic quantities \cite{PhysRevA.72.042110,PhysRevA.73.022324,PhysRevA.79.052106,gibilisco2007uncertainty,gibilisco2009quantum,PhysRevA.96.022132}. A quantitative description of incompatibility in quantum mechanics was persued recently, from the perspective of state discrimination and quantum steering \cite{PhysRevA.92.022115,PhysRevLett.115.230402,heinosaari2015incompatibility,Haapasalo_robustness_15,PhysRevA.93.052112,PhysRevA.98.012126,carmeli2018quantum,uola2018quantifying,skrzypczyk2019all}. In this approach, one of the central notions is that of a \textit{parent} measurement, i.e., one that can simulate the original one through probabilistic post-processing.

Quantum resource theories provide a framework to systematically characterize and quantify quantum properties (for example, entanglement). There, such a property is fully described by the conversion relations among states under a class of quantum processes that, suitably chosen, cannot enhance it \cite{chitambar_quantum_2018}. The transformation relations among quantum states can be mathematically described by a \textit{preorder}: if a state can be transformed into another under the distinguished class of processes, then it lies ``higher'' in the ordering \footnote{Mathematically, preorder (or quasiorder) is a binary relation that is reflexive and transitive}. In turn, the preorder induces a family of scalar functions, called \textit{monotones}, that cannot increase under the allowed state transitions and therefore jointly quantify the resourcefulness of states.

In this work, we introduce a notion of incompatibility of quantum measurements relative to a reference basis by means of a preorder. More specifically, considering states that are diagonal in the reference basis, we investigate whether the probability distributions associated to different measurements can be transformed into one another, by means of probabilistic post-processing. The aforementioned question of convertibility generates a preorder over quantum measurements which, in turn, gives rise to families of scalar functions that jointly quantify the introduced notion of incompatibility relative to a basis. We first consider the special case of orthogonal measurements in which the ordering provides a quantitative, as well as conceptual, connection between incompatibility, quantum coherence and entropic uncertainty relations. We then extend to include generalized measurements and we relate the resulting notion to parent measurements.


\prlsection{Preliminaries} Consider a non-degenerate observable $A$ over a finite dimensional Hilbert space $\mathcal H \cong \mathbb C ^d$ with spectral decomposition $A = \sum_{i=1}^d a_i P_i$ (we denote $P_i$ $ \coloneqq \ket{i}\!\bra{i}$). The role of the eigenvalues $a_i$ is to label the possible outcomes and, as long as they are distinct, this role is unimportant from the point of view of the measurement process, since the probability distribution $\bs p_{\mathbb B} (\rho)$ with components $[\bs p_{\mathbb B}(\rho)]_i \coloneqq \Tr \left( P_i \rho \right)$ (representing  a measurement of $A$ in state $\rho$) only depends on the set of projectors $\{ P_i\}_i$ \footnote{In this work, we will not distinguish among probability distributions that differ solely by permutations.}. We will henceforth use the term \textit{basis} (always meaning orthonormal) to refer to a set of rank-1 orthogonal projectors $\mathbb B = \{ P_i \}_{i=1}^d$, with $\sum_i P_i = I$ \footnote{Notice that such a definition does not distinguish between orthonormal sequences of kets that differ solely by reordering of elements or by phase factors, i.e., by a transformation $\ket{j} \mapsto e^{i \theta_j} \ket{\sigma (j)} $ ($\sigma \in \mathcal P_d$ is a permutation).}. A generalized measurement (POVM) is represented by a set of operators $\mathbb F = \{ F_i \}_{i}$ such that $F_i \ge 0$ and $\sum_i F_i = I$.
We associate with every basis $\mathbb B$ the real abelian algebra of observables $\mathcal A_{\mathbb B}$ generated by $\{P_i\}_i$. The set of bases over the Hilbert space is denoted by $\mathcal M(\mathcal H) $.

\prlsection{Preorder and monotones}
The idea of deriving families of scalar functions that quantify some feature (for instance, the degree of uniformity of a probability distribution) by invoking a preorder has its roots in the mathematical theory of majorization \cite{hardy1988inequalities}. Such a paradigm has been extensively employed in quantum information in the context of resource theories for quantifying features of quantum systems, such as entanglement \cite{horodecki2009quantum}, coherence \cite{PhysRevLett.113.140401} and out-of-equilibrium thermodynamics \cite{PhysRevLett.111.250404}.

In this approach, one distinguishes a class of quantum operations, deemed as ``easy'', motivated by some practical consideration.
For example, in the case of entanglement, the easy operations are local quantum operations between two parties together with classical communication (LOCC).
This set of maps induces a preorder "$\ge$" in the set of quantum states, defined by the allowed transitions under easy operations, namely $\rho \ge \sigma$ if and only if there exists an easy operation $\mathcal E$ such that $\sigma = \mathcal E (\rho)$.
The binary relation induced is a preorder since, by definition, the identity quantum channel is always an easy operation and also the composition of easy operations is again an easy operation. Moreover, $\rho \ge \sigma$ should intuitively correspond in our example to a statement like ``$\rho$ is more entangled than $\sigma$.'' This quantification is rigorously captured by the notion of \textit{monotones}, i.e., scalar functions $f$ over states, non-increasing under allowed state transitions ($\rho \ge \sigma \Longrightarrow f(\rho) \ge f(\sigma)$).
Families of monotones $\{ f_a \}_\alpha$ are said to form a \textit{complete set}, if they satisfy $f_\alpha(\rho) \ge f_\alpha(\sigma) \; \forall \alpha \; \Longleftrightarrow \rho \ge \sigma $.



\prlsection{A preorder over orthonormal bases} Our goal is to define a notion of incompatibility relative to a basis. Let us begin with the case of orthogonal measurements. Consider a basis $\mathbb B_0 = \{  P_i^{(0)} \}_{i}$ and a state $\rho_0 = \sum_i p_i P_i ^{(0)} \in \mathcal A_{\mathbb B_0}$ diagonal over it, described by the probability distribution $\bs p$. Given another basis $\mathbb B_1 = \{  P_i^{(1)} \}_{i}$, one can also associate with $\rho_0$ the probability distribution $\bs p_{\mathbb B_1}(\rho_0)$ corresponding to a measurement over $\mathbb B_1$. In fact, $\bs p_{\mathbb B_1}(\rho_0) =  X(\mathbb B_1 , \mathbb B_0) \bs p$, where $X(\mathbb B_1, \mathbb B_0)$ denotes the bistochastic matrix \footnote{Notice that the ordering of the projectors in a basis is arbitrary, hence the $X$ matrix is non-unique up to permutations.} with elements
\begin{align}
\left[X(\mathbb B_1, \mathbb B_0) \right]_{ij} \coloneqq \Tr \left( P_i^{(1)} P_j ^{(0)}  \right) \,\;.
\end{align}
Moreover, the probability distribution $\bs p_{\mathbb B_1} (\rho_0)$ is always ``more uniform'' than $\bs p$. This is precisely captured by the majorization statement $\bs p \succ \bs p _{\mathbb B_1} (\rho_0) $
that is true for any basis $\mathbb B_1$ and follows directly from the bistochasticity of $X$ \cite{marshall1979inequalities}.

Let us now introduce another measurement, over a basis $\mathbb B_2$, such that there exists some bistochastic matrix $M$ with
\begin{align} \label{Eq:prob_post_process}
X(\mathbb B_2 , \mathbb B_0) = M X(\mathbb B_1 , \mathbb B_0) \,\;.
\end{align} 
This relation has a rather strong implication: for all states $\rho_0$ diagonal in $\mathbb B_0$, the distribution $\bs p_{\mathbb B_2} (\rho_0) $ can be obtained from $\bs p_{\mathbb B_1} (\rho_0) $ through ``uniforming'' classical post-processing, represented by some bistochastic $M$ which is independent of the state.

Motivated by the above, if Eq.~\eqref{Eq:prob_post_process} holds, we declare that ``an orthogonal measurement over $\mathbb B_1$ is more compatible than over $\mathbb B_2$, relative to states diagonal in $\mathbb B_0$''. We introduce the following notation.
\begin{definition2}
\label{th:maindefinitionnew}
We denote $\mathbb B_1 \succ ^{ \mathbb B_0}  \mathbb B_2$ if and only if there exists a bistochastic matrix $M$ such that $X(\mathbb B_2 , \mathbb B_0) = M X(\mathbb B_1 , \mathbb B_0)$.
\end{definition2}
The definition has the following immediate consequences.
\begin{inparaenum}[(i)]
\item The binary relation \po over $\mathcal M (\mathcal H)$ is a preorder, i.e., $\maj{}{}$ $\forall \,\mathbb B$ (reflexivity) and  $\maj{1}{2}$, $\maj{2}{3}$ $\Longrightarrow$ $\maj{1}{3}$ (transitivity).
\item $\maj{0}{}$ for all bases $\mathbb B$ (``measurement over $\mathbb B_0$ is more compatible than over any other basis'')
\item  $\mathbb B \succ ^{ \mathbb B_0} \mathbb B_{\MU}$ for all bases $\mathbb B$, where $\mathbb B_{\MU}$ is any basis mutually unbiased to $\mathbb B_0$ (``measurement over any basis is more compatible than over a mutually unbiased one'').
\end{inparaenum}

The preorder \po is not in general a partial order, i.e., $\maj{1}{2}$ and $\maj{2}{1}$ do not necessarily imply $\mathbb B_1 = \mathbb B_2$ . For example, any $\mathbb B_1$ and $\mathbb B_2$ that are unbiased relative to $\mathbb B_0$ satisfy the aforementioned relations but can be taken to be distinct.

The ordering \eqref{Eq:prob_post_process} over matrices has been studied in the context of multivariate majorization, called \textit{matrix majorization} \cite{marshall1979inequalities}. There, $A \succ C$ for matrices $A$ and $C$ if there exists a bistochastic $B$ such that $C = B A$. We now connect the aforementioned preorder with quantum measurements.


\prlsection{\po from non-selective measurements} Def.~\ref{th:maindefinitionnew}
can be operationally understood in terms of classical post-processing of probability distributions. Here we show that the ordering \po also admits a quantum operational interpretation in terms of emulation of a non-selective measurement via additional such measurements.

Any basis $\mathbb B$ gives rise to a corresponding \textit{dephasing} or \textit{measurement} quantum map
\begin{align}
	\mathcal D_{\mathbb  B} (X) \coloneqq \sum_i P_i X P_i \,\;. \label{dephasing_definition}
\end{align}
The latter can be though of as a non-selective orthogonal measurement of any non-degenerate observable belonging in $\mathcal A_{\mathbb B}$, while a composition $\mathcal D_{\mathbb B_n} \dots \mathcal D_{\mathbb B_1}$ represents the quantum operation associated with $n$ such successive measurements \footnote{In fact, the basis $\mathbb B$ corresponding to a dephasing map $\mathcal D_{\mathbb B}$ is unique, i.e.,  the mapping $\mathbb B \mapsto \mathcal D_{\mathbb B}$ is injective  \cite{PhysRevA.97.032304}, and similarly for $\mathbb B \mapsto \mathcal A_{\mathbb  B}$ \cite{zanardi_18}.}.

We are now ready to state the result.
The ordering $\maj{1}{2}$ holds if and only if, for any initial state diagonal in $\mathbb B_0$, the output of a non-selective $\mathbb B_2$ measurement can be emulated by a non-selective $\mathbb B_1$ measurement, followed possibly by an additional sequence of measurements and a unitary rotation. More specifically:

\begin{restatable}{prop}{preorderdephasing}
\label{th:maindefinition}
$\maj{1}{2}$ if and only if there exist a unitary superoperator $\,\mathcal U$ and a (possibly trivial) sequence of measurements $\{ \mathcal D_{\mathbb B'_\alpha} \}_\alpha$ such that
\begin{align}
\mathcal D_{\mathbb B_2} \mathcal D_{\mathbb B_0} = \mathcal U \left[ \prod _\alpha \mathcal D_{\mathbb B'_\alpha} \right]  \mathcal D_{\mathbb B_1} \mathcal D_{\mathbb B_0} \,\;.  \label{eq_preorder_dephasing}
\end{align}
\end{restatable}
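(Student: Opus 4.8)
The plan is to prove both implications by collapsing the superoperator identity \eqref{eq_preorder_dephasing} onto the matrix identity of Def.~\ref{th:maindefinitionnew}. The crucial structural observation is that every composition appearing here has $\mathcal D_{\mathbb B_0}$ as its rightmost factor, so both sides depend on an input $X$ only through $\mathcal D_{\mathbb B_0}(X)$; it therefore suffices to compare the two maps on operators diagonal in $\mathbb B_0$ and extend by linearity. On a state $\rho_0=\sum_j p_j P_j^{(0)}$ each dephasing acts on the probability vector by the corresponding overlap matrix, $\mathcal D_{\mathbb B}\big(\sum_j q_j P_j^{(0)}\big)=\sum_i [X(\mathbb B,\mathbb B_0)\bs q]_i\, P_i^{\mathbb B}$, a composition multiplies the respective (bistochastic) $X$ matrices, and a unitary superoperator $\mathcal U(\cdot)=U(\cdot)U^\dagger$ merely rotates the output basis. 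All the content of the Proposition then lives at the level of these bistochastic overlap matrices.

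For the reverse direction I would evaluate \eqref{eq_preorder_dephasing} on a generic (distinct-entry) $\bs p$. The left-hand side outputs a state diagonal in $\mathbb B_2$ with spectrum $X(\mathbb B_2,\mathbb B_0)\bs p$, while the right-hand side outputs $U\sigma U^\dagger$, with $\sigma$ diagonal in the last inserted basis and spectrum $\tilde M\, X(\mathbb B_1,\mathbb B_0)\bs p$, where $\tilde M$ is the product of the overlap matrices of the inserted measurements and is hence bistochastic. Equality of the two output states forces equal spectra and, for generic $\bs p$, a single fixed permutation $P$ with $X(\mathbb B_2,\mathbb B_0)=P\tilde M\,X(\mathbb B_1,\mathbb B_0)$. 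Since $M\coloneqq P\tilde M$ is bistochastic, this is exactly $\maj{1}{2}$.

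For the forward direction I start from a bistochastic witness $M$ with $X(\mathbb B_2,\mathbb B_0)=M\,X(\mathbb B_1,\mathbb B_0)$ and realize $M$ itself as a product of overlap matrices. The key ingredient is that every bistochastic matrix factors as a finite product $M=X_n\cdots X_1$ of unistochastic matrices (those of the form $\big(|U_{ab}|^2\big)$), the orthostochastic $T$-transforms $(1-\lambda)I+\lambda Q_{kl}$ furnishing the elementary coordinate-equalizing building blocks in the spirit of Hardy--Littlewood--P\'olya. Each factor $X_j$ is then realized as a genuine overlap matrix $X(\mathbb B'_j,\mathbb B'_{j-1})$: given $\mathbb B'_{j-1}$ and a unitary $U^{(j)}$ with $X_j=\big(|U^{(j)}_{ab}|^2\big)$, the rotated basis $\ket{a^{(j)}}=\sum_b \overline{U^{(j)}_{ab}}\ket{b^{(j-1)}}$ is orthonormal and satisfies $\Tr\big(P_a^{(j)}P_b^{(j-1)}\big)=|U^{(j)}_{ab}|^2$. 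Fixing $\mathbb B'_0=\mathbb B_1$ and inserting $\mathcal D_{\mathbb B'_1},\dots,\mathcal D_{\mathbb B'_n}$ after $\mathcal D_{\mathbb B_1}\mathcal D_{\mathbb B_0}$ maps any $\mathbb B_0$-diagonal input to a state diagonal in $\mathbb B'_n$ with probability vector $M\,X(\mathbb B_1,\mathbb B_0)\bs p=X(\mathbb B_2,\mathbb B_0)\bs p$; a final $\mathcal U$ rotating $\mathbb B'_n$ onto $\mathbb B_2$ reproduces $\mathcal D_{\mathbb B_2}\mathcal D_{\mathbb B_0}$, and equality on all $\mathbb B_0$-diagonal operators gives \eqref{eq_preorder_dephasing}.

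The main obstacle is precisely the factorization lemma invoked in the forward direction. One must establish not the weak (classical) fact that the target vector is majorized by a given one, but the stronger statement that $M$ decomposes as a \emph{fixed} product of elementary bistochastic maps with coefficients independent of the input. The delicate point is dimensional: for $d\ge 3$ a single unistochastic factor need not be orthostochastic (e.g.\ $J_d/d$), so I would work within the full unistochastic class, which is exactly the class of overlap matrices available to us; should orthostochastic $T$-transforms already generate all bistochastic matrices, the construction only simplifies. Bounding the number of required factors and checking that the permutation $P$ can be fixed uniformly in $\bs p$ is the remaining bookkeeping.
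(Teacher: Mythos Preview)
Your strategy coincides with the paper's: restrict to $\mathbb B_0$-diagonal inputs so that the superoperator identity becomes a matrix identity between overlap matrices, and in the forward direction realize the bistochastic witness $M$ as a product of unistochastic factors, each of which is an honest $X(\mathbb B'_\alpha,\mathbb B'_{\alpha-1})$. The reverse direction is the same as the paper's (the paper evaluates on each $P_i^{(0)}$ rather than on a generic $\bs p$, which sidesteps your permutation bookkeeping, but the content is identical).

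The gap is precisely the factorization step you yourself flag as ``the main obstacle.'' You assert that every bistochastic $M$ admits an \emph{exact finite} factorization $M=X_n\cdots X_1$ into unistochastic matrices, with $T$-transforms as the building blocks. The paper does not claim this, and for good reason: its Lemma only shows that a bistochastic matrix with all strictly positive entries factors into finitely many $T$-transforms, and that an arbitrary bistochastic matrix can therefore be \emph{approximated arbitrarily well} by such products. Accordingly, the paper explicitly reinterprets \eqref{eq_preorder_dephasing} immediately after the statement of the Proposition: the auxiliary sequence may be infinite, and equality is to be read as ``$\left\|\mathcal D_{\mathbb B_2}\mathcal D_{\mathbb B_0}-\mathcal U\big[\prod_\alpha\mathcal D_{\mathbb B'_\alpha}\big]\mathcal D_{\mathbb B_1}\mathcal D_{\mathbb B_0}\right\|$ can be made arbitrarily small.'' Your proposal, by contrast, treats the identity as exact with finitely many dephasings, which would require the stronger (and, to my knowledge, not established) statement that finite products of unistochastic matrices exhaust the Birkhoff polytope. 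To close the argument you must either supply a proof of that stronger factorization, or adopt the paper's approximation reading and carry the $\epsilon$ through the construction.
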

All proofs can be found in the Appendix.

The auxiliary sequence of measurements needed might be, in fact, infinite. Eq.~\eqref{eq_preorder_dephasing} should be understood as ``$\left\| \mathcal D_{\mathbb B_2} \mathcal D_{\mathbb B_0} - \mathcal U \left[ \prod _\alpha \mathcal D_{\mathbb B'_\alpha} \right]  \mathcal D_{\mathbb B_1} \mathcal D_{\mathbb B_0} \right\|$ can be made arbitrarily small'', i.e., the state transformation of the RHS can approximate arbitrarily well the one of the LHS.


We now analyze the $d=2$ case, by invoking Prop.~\ref{th:maindefinition} together the usual Bloch ball representation of quantum states $\rho = \frac{1}{2}\left( I +\bs v \cdot \bs \sigma \right)$, where different bases are in one to one correspondence with lines passing from the center. In this representation, the action of $\D{1}$ on a state $\rho$ coincides with projecting $\bs v$ onto the $\mathbb B_1$ line while the action of $\mathcal U$ is translated into an $SO(3)$ rotation. Clearly, Eq.~\eqref{eq_preorder_dephasing} can be satisfied (in fact, by means of a single $\mathcal D_{\mathbb B'_1}$) if and only if $\theta_{1} \le \theta_{2}$; here $\theta_i$ is the (acute) angle between the lines corresponding to $\mathbb B_0$ and $\mathbb B_i$. In particular, for $d=2$ the ordering \po is a total preorder, but not for $d > 2$.

\prlsection{Measures of relative (in)compatibility} A preorder gives rise to a distinguished class of scalar functions, i.e., monotones. We adopt the following definition.

\begin{definition2} A function $f_{\mathbb B_0} :  \mathcal M(\mathcal H) \to \mathbb  R_0^+$ is \textit{measure of compatibility (incompatibility)} relative to $\mathbb  B_0$ if it convex (concave) with respect to the preorder \po, i.e., $\maj{1}{2} \Longrightarrow f_{\mathbb B_0}(\mathbb B_1) \ge f_{\mathbb B_0}(\mathbb B_2)$ $\big(\,\maj{1}{2} \Longrightarrow  f_{\mathbb B_0}(\mathbb B_1) \le f_{\mathbb B_0}(\mathbb B_2)\,\big)$.
Moreover, if $f_{\mathbb B_0}(\mathbb B_1) = f_{\mathbb B_1}(\mathbb B_0)$, we call it a \textit{symmetric} measure of relative compatibility (incompatibility).
\end{definition2}


The following Proposition gives a construction for measures of relative compatibility arising from convex functions. It is a direct consequence of a result from \cite{karlin1983comparison}, derived in the context of matrix majorization.

\begin{restatable}{prop}{convexmonotones}
\label{th:convexmonotones}
Let $\phi : \mathbb R ^d \to \mathbb R$ be a continuous convex function.
Then,
\begin{align}
	f^{\phi}_{\mathbb B_0}(\mathbb B_1) \coloneqq \sum_{i} \phi(X_i^R(\mathbb B_1,\mathbb B_0)) \label{eq:measures_convex}
\end{align}
is a measure of relative compatibility; here, $X_i^R$ stand for the row vectors of the matrix $X_{ij}$.
\end{restatable}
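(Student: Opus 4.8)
The plan is to strip Definition~\ref{th:maindefinitionnew} down to a statement about convex combinations of row vectors and then apply Jensen's inequality. By definition, $\maj{1}{2}$ means there is a bistochastic matrix $M$ with $X(\mathbb B_2,\mathbb B_0)=M\,X(\mathbb B_1,\mathbb B_0)$. Abbreviating $A\coloneqq X(\mathbb B_1,\mathbb B_0)$ and $C\coloneqq X(\mathbb B_2,\mathbb B_0)$ (so that, in the matrix-majorization language of \cite{marshall1979inequalities}, $A\succ C$), reading off the $i$-th row gives $C_i^R=\sum_j M_{ij}\,A_j^R$. Because $M$ is row-stochastic, the weights $\{M_{ij}\}_j$ are nonnegative and sum to one, so each row of $C$ is a genuine convex combination of the rows of $A$.

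With this in hand the proof is essentially a one-line computation. First I would bound each term by convexity of $\phi$,
\begin{align}
\phi\big(C_i^R\big)=\phi\Big(\sum_j M_{ij}\,A_j^R\Big)\le \sum_j M_{ij}\,\phi\big(A_j^R\big)\,.
\end{align}
Then I would sum over $i$ and interchange the two summations,
\begin{align}
\sum_i \phi\big(C_i^R\big)\le \sum_j\Big(\sum_i M_{ij}\Big)\phi\big(A_j^R\big)=\sum_j \phi\big(A_j^R\big)\,,
\end{align}
where the final equality uses that $M$ is \emph{also} column-stochastic, $\sum_i M_{ij}=1$. The left- and right-hand sides are exactly $f^\phi_{\mathbb B_0}(\mathbb B_2)$ and $f^\phi_{\mathbb B_0}(\mathbb B_1)$, so $\maj{1}{2}$ implies $f^\phi_{\mathbb B_0}(\mathbb B_1)\ge f^\phi_{\mathbb B_0}(\mathbb B_2)$, which is precisely the defining property of a measure of relative compatibility.

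There is no genuine difficulty to overcome, but the step that deserves care is the appeal to \emph{double} stochasticity of $M$: row-stochasticity alone supplies the convex-combination structure needed for Jensen, whereas column-stochasticity is exactly what collapses the summed upper bound back to $\sum_j\phi(A_j^R)$ rather than to a reweighted expression. Continuity of $\phi$ plays no role in the inequality itself (a finite convex function on $\mathbb R^d$ is automatically continuous); it is stated only to guarantee that $f^\phi_{\mathbb B_0}$ is finite and well defined. The result therefore follows directly and reproduces the matrix-majorization statement of \cite{karlin1983comparison}.
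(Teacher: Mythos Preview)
Your proof is correct and follows exactly the same route as the paper: write each row of $X(\mathbb B_2,\mathbb B_0)$ as a convex combination of the rows of $X(\mathbb B_1,\mathbb B_0)$ via the bistochastic $M$, apply Jensen's inequality, sum over rows, and use column-stochasticity of $M$ to collapse the bound. Your added remarks on the separate roles of row- and column-stochasticity and on the (non-)role of continuity are accurate but go slightly beyond what the paper states.
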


An analogous claims hold for the incompatibility case in terms of concave functions.

In fact, the family $\{f^{\phi}_{\mathbb B_0}(\mathbb B_1) \}_\phi$ for all continuous convex $\phi$ is known to be a complete family of monotones for matrix majorization \cite{karlin1983comparison}, i.e., joint monotonicity $f^{\phi}_{\mathbb B_0}(\mathbb B_1) \ge f^{\phi}_{\mathbb B_0}(\mathbb B_2)$ for all such functions is enough to imply $\maj{1}{2}$. In that sense, the existence of a probabilistic uniforming process $M$ such that Eq.~\eqref{Eq:prob_post_process} holds is fully captured by this family of functions.


\prlsection{Incompatibility and coherence} Quantum coherence refers to the property of quantum systems to exist in a linear superposition of different physical states. It is a notion defined with respect to some preferred, physically relevant basis, which we will denote as $\mathbb B_0$.
A state $\rho$ is said to be coherent if there exist non-vanishing off-diagonal elements when $\rho$ is expressed as a matrix in $\mathbb B_0$. Recently, coherence was formulated as a resource theory \cite{RevModPhys.89.041003}.
One of the central measures in the theory is \textit{relative entropy of coherence}, $ \cre{0}(\rho) \coloneqq S(\rho \, \|  \,\D{0} \rho ) $ that admits several operational interpretations in terms of conversion rates \cite{PhysRevLett.116.120404,PhysRevLett.120.070403}. Later, we will also invoke the 2-coherence $c_{\mathbb B_0}^{(2)} \coloneqq \sum_{i\ne j} \left|\rho_{ij}\right|^2$ \footnote{The 2-coherence, however, fails to satisfy the monotonicity property under some sets of incoherent operations.}.


The ordering \po has rather strong implications in terms of quantum coherence, both in terms of state conversion under Incoherent Operations (i.e., the easy operation in the resource theory \cite{PhysRevLett.113.140401} of coherence) as well as coherence monotones. We define the action of a unitary superoperator over a basis as $\mathcal U (\mathbb B) \coloneqq  \{ \mathcal U (P_i)  \}_i$.

\begin{restatable}{prop}{relcoherence}
\label{th:relcoherence}
Let $\maj{1}{2}$.
\begin{enumerate}[(i)]
\item Consider a pair of unitary quantum maps $\mathcal U, \mathcal V$ such that $\mathcal U (\mathbb B_1) = \mathbb B_0$ and $\mathcal V (\mathbb B_2) = \mathbb B_0$ and a pure state $P_j \in \mathbb B_0$. Then, $\mathcal V (P_j)$ can be transformed to $\mathcal U (P_j)$ via incoherent operations over $\mathbb B_0$. Consequently, all coherence measures over such states are non-increasing.

\item $\cg{1} (\rho_0) \le \cg{2} (\rho_0)$ for all $\rho_0$ diagonal in $\mathbb B_0$, where $\cg{}$ denotes either the relative entropy of coherence or the 2-coherence over $\mathbb B$.

\end{enumerate}
\end{restatable}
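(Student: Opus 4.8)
The plan is to route both statements through a single observation. If $\maj{1}{2}$, then writing $X_k \coloneqq X(\mathbb B_k, \mathbb B_0)$ we have $X_2 = M X_1$ for some bistochastic $M$, so for \emph{every} vector $\bs v$ the image $X_2 \bs v = M(X_1 \bs v)$ is obtained from $X_1 \bs v$ by a bistochastic map and hence $X_2 \bs v \prec X_1 \bs v$ \cite{marshall1979inequalities}. Applied columnwise this controls the single-outcome distributions in part (i), while applied to the spectrum of $\rho_0$ it controls the $\mathbb B_k$-diagonals in part (ii). Everything else is bookkeeping together with an invocation of the appropriate Schur-concavity or convertibility theorem.

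For part (i), first I would identify the relevant probability vectors. Since $\mathcal U(\mathbb B_1) = \mathbb B_0$ as an (unordered) basis, the implementing unitary $U$ maps the kets of $\mathbb B_1$ onto those of $\mathbb B_0$ up to phases and a permutation $\sigma$; computing the $\mathbb B_0$-diagonal of $\mathcal U(P_j) = U P_j U^\dagger$ then gives $[\D{0}\mathcal U(P_j)]_{kk} = [X_1]_{\sigma^{-1}(k),j}$, so this diagonal is, up to permutation, the $j$-th column of $X_1$, and likewise the $\mathbb B_0$-diagonal of $\mathcal V(P_j)$ is the $j$-th column of $X_2$. The columnwise instance of the majorization above then yields $\bs p_{\mathcal V(P_j)} \prec \bs p_{\mathcal U(P_j)}$. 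Because both states are pure, I would finish by invoking the majorization criterion for pure-state convertibility under incoherent operations, by which $\bs p_\psi \prec \bs p_\phi$ is exactly the condition for $\psi \to \phi$ under incoherent operations over $\mathbb B_0$; taking $\psi = \mathcal V(P_j)$ and $\phi = \mathcal U(P_j)$ delivers the claimed transformation. The ``consequently'' is then immediate, since any coherence monotone is non-increasing under incoherent operations.

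For part (ii), I would record two closed forms on states $\rho_0 = \sum_j p_j P_j^{(0)}$ diagonal in $\mathbb B_0$. Using $S(\rho \,\|\, \mathcal D_{\mathbb B}\rho) = S(\mathcal D_{\mathbb B}\rho) - S(\rho)$ and that the $\mathbb B_k$-diagonal of $\rho_0$ equals $X_k \bs p$, the relative entropy of coherence becomes $\cre{k}(\rho_0) = H(X_k \bs p) - H(\bs p)$ with $H$ the Shannon entropy, while the identity $\sum_{i\ne j}|\rho_{ij}|^2 = \Tr \rho^2 - \sum_i \rho_{ii}^2$ gives $c^{(2)}_{\mathbb B_k}(\rho_0) = \|\bs p\|_2^2 - \|X_k \bs p\|_2^2$. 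In both cases the $\mathbb B_k$-dependence enters only through a Schur-concave functional of $X_k \bs p$, namely $H$ respectively $-\|\cdot\|_2^2$, the remaining terms being independent of $\mathbb B_k$. Since $X_2 \bs p \prec X_1 \bs p$ and Schur-concave $g$ obey $x \prec y \Rightarrow g(x) \ge g(y)$, I obtain $H(X_2\bs p) \ge H(X_1\bs p)$ and $-\|X_2\bs p\|_2^2 \ge -\|X_1\bs p\|_2^2$, i.e.\ $\cg{1}(\rho_0) \le \cg{2}(\rho_0)$ for either choice.

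The load-bearing realization is that both parts collapse onto $X_2 \bs v \prec X_1 \bs v$; once that is in hand, the remainder is standard Schur-concavity in part (ii) and the pure-state incoherent-convertibility theorem in part (i), the latter being the only nonelementary external input. The step demanding the most care is the permutation bookkeeping in part (i): $\mathcal U(\mathbb B_1) = \mathbb B_0$ is an equality of \emph{sets}, so the diagonals emerge only up to a relabeling of outcomes. This is harmless, because majorization, incoherent convertibility of pure states, and all coherence measures are invariant under such permutations, but it should be stated explicitly rather than glossed over.
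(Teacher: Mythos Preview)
Your proof is correct and, for part~(i) and the relative-entropy half of part~(ii), follows exactly the paper's argument: both reduce to the columnwise majorization $X_2 \bs v \prec X_1 \bs v$, identify the relevant diagonals with columns of $X_k$ (with the same permutation caveat you flag), and then invoke the pure-state incoherent-convertibility criterion or Schur-concavity of the Shannon entropy.

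The one point of genuine divergence is the $2$-coherence. You compute $c^{(2)}_{\mathbb B_k}(\rho_0)=\|\bs p\|_2^2-\|X_k\bs p\|_2^2$ and finish by Schur-concavity of $-\|\cdot\|_2^2$, keeping the argument purely at the level of probability vectors and on the same footing as the relative-entropy case. The paper instead passes through its operator-level characterization (Prop.~\ref{th:maindefinition}), writing $\D{2}\rho_0=\mathcal U\big(\prod_\alpha \D{\alpha}\big)\D{1}\rho_0$ and then using that the $2$-norm is contractive under unital CPTP maps. Your route is more elementary and self-contained (no appeal to Prop.~\ref{th:maindefinition} or its approximation subtleties); the paper's route has the virtue of showing directly at the operator level why any unital-CPTP-monotone norm would work, not just $\|\cdot\|_2$. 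Either is fine here.
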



In addition to the interpretation of Prop.~\ref{th:relcoherence} in the framework of coherence, one can also infer from (ii) above that a $\D{1}$ measurement disturbs less $\rho_0$ compared to a $\D{2}$ measurement, if $\maj{1}{2}$, as it is precisely captured by statistical meaning of the relative entropy \cite{wilde2013quantum}.

In the light of the interpretation of $\cre{}$ as distillable coherence \cite{PhysRevLett.116.120404}, (ii) above demonstrates a quantitative trade-off between compatibility and coherence. Moreover, any coherence average $C_{\,\mathbb B_0}(\mathbb B) \coloneqq \int d\mu (\rho_0) \cg{} (\rho_0)  $ is a measure of incompatibility of $\mathbb B$ relative to $\mathbb B_0$. In fact, these averages over the uniform distribution have been performed, verifying explicitly that $C_{\,\mathbb B_0}(\mathbb B)  = f^{\phi}_{\mathbb B_0}(\mathbb B_1)$ is of the form indicated in the (concave analogue of) Prop.~\ref{th:convexmonotones}. Indeed, $\phi$ coinsides with the subentropy \cite{jozsa1994lower} for the case of the relative entropy of coherence, while  $\phi (p_1,\dots,p_d) \propto \sum_i ( \frac{1}{d} - p_i^2 )$ for the 2-coherence \cite{zhang_coherence_2018}.

Finally, we note that in \cite{durt2010mutually}, the authors considered a geometrically motivated measure of ``mutual unbiasedness'' between pairs of orthonormal bases. Their measure is proportional to the 2-coherence average above, hence is also a symmetric measure of relative incompatibility.



\prlsection{Incompatibility and uncertainty} We now consider implication of the preorder \po in terms of uncertainty and fluctuations.

By its definition, the ordering $\maj{1}{2}$ assures that the distribution $\bs p _{\mathbb B_2} (\rho_0) $ is ``more uniform'' than $\bs p _{\mathbb B_1} (\rho_0)$, for any state $\rho_0$ diagonal in $\mathbb B_0$. An immediate consequence is that all Schur-concave functions, which for instance include $\alpha$-R\'enyi entropies for ($\alpha = 1$ corresponds to the usual Shannon entropy), satisfy $S_\alpha(\bs p _{\mathbb B_1}(\rho_0)) \le S_\alpha( \bs p _{\mathbb B_2}(\rho_0))$ \footnote{The converse statement in terms of majorization, however, does not hold, i.e., $\bs p _{\mathbb B_1} (\rho_0) \succ \bs p _{\mathbb B_2} (\rho_0) $ for all diagonal $\rho_0$ is not enough to assure $\maj{1}{2}$. A specific counterexample was constructed in the context of multivariate majorization by Horn in \cite{horn1954doubly}.}.

Quantum fluctuations over different bases can be quantified via entropic uncertainty relations \cite{RevModPhys.89.015002}. There, one tries to impose bounds over entropic quantities, such as $S_\alpha(\bs p_{\mathbb B_1}(\rho_0)) + S_\alpha(\bs p_{\mathbb B_2}(\rho_0)) \ge r_{\mathbb B_0}(\mathbb B_2,\mathbb B_1)$ ($\alpha = \beta = 1$ corresponds to the usual Shannon entropy), as a function of the bases. The most well-known inequality is due to Maassen and Uffink \cite{PhysRevLett.60.1103} and states that a ($\mathbb B_0$ independent) choice for the above bound is $r^{(\text{MU})}(\mathbb B_2,\mathbb B_1) \coloneqq  -\log (\max_{i,j}X_{ij}(\mathbb B_2,\mathbb B_1)) $ for any $\alpha,\beta\ge 1/2$ with $1/\alpha + 1/\beta = 2$ . The bound has recently been improved by Coles \textit{et al.} \cite{PhysRevLett.108.210405} for the case of Shannon/von Neumann entropy, as $S(\bs p_{\mathbb B_1}(\rho_0)) + S(\bs p_{\mathbb B_2}(\rho_0))  \ge S(\rho_0) + r^{(\text{MU})}(\mathbb B_2,\mathbb B_1)$.

Let us also consider the quantity
\begin{align}
Q_{\mathbb B_ 0} (\mathbb B_1) &\coloneqq \sup_{A \in \mathcal A_{\mathbb B_1}, \left\|A  \right\|_2 = 1} \; \max_{i = 1,\dots,d} \Var_i (A) \,\;,\\
\text{where }\Var_i (A) &\coloneqq \Tr \left( P_i^{(0)} A^2 \right) - \left[ \Tr \left( P_i^{(0)} A \right) \right]^2 \nonumber \,\;,
\end{align}
that captures the strength of the fluctuations of a pure state diagonal in $\mathbb B_0$ over a $\mathbb B_1$ measurement. In the Appendix we derive the upper bound
\begin{align}
Q_{\mathbb B_0} (\mathbb B_1)  \le 1 - \lambda_{\min} \left( X(\mathbb B_1,\mathbb B_0)X^T(\mathbb B_1,\mathbb B_0) \right) \coloneqq q(\mathbb B_1 , \mathbb B_0)
\end{align}
($\lambda_{\min}(X)$ stands for the minimum eigenvalue of $X$). 
The bound is symmetric and satisfies $q(\mathbb B_1 , \mathbb B_0) = 0$ if and only if $\mathbb B_1 = \mathbb B_0$, hence it vanishes if and only if  $Q_{\mathbb B_0}(\mathbb B_1)$ vanishes.

In words, $r^{(\text{MU})}$ and $q$ provide bounds on uncertainty and fluctuations that arise due to the incompatibility between the bases of measurement (for $r^{(\text{MU})}$) or state preparation and measurement (for $q$), and can be thought of as playing a role analogous to the commutator term in the usual uncertainty relations for observables. As such, they both turn out to be (symmetric) measures of relative incompatibility, monotonic relative to the ordering \po.

\begin{restatable}{prop}{bounds}
\label{th:bounds}
Let $\maj{1}{2}$. Then, $q(\mathbb B_1 , \mathbb B_0) \le q(\mathbb B_2 , \mathbb B_0)$ and $r^{(\text{MU})}(\mathbb B_1,\mathbb B_0) \le r^{(\text{MU})}(\mathbb B_2,\mathbb B_0)$.
\end{restatable}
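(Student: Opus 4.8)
\emph{Proof proposal.} The plan is to treat the two inequalities separately, both by exploiting the single structural fact supplied by the hypothesis $\maj{1}{2}$: there exists a bistochastic matrix $M$ with $X(\mathbb B_2,\mathbb B_0) = M\,X(\mathbb B_1,\mathbb B_0)$. Throughout I write $A \coloneqq X(\mathbb B_1,\mathbb B_0)$ and $C \coloneqq X(\mathbb B_2,\mathbb B_0) = MA$, and I recall that every $X$-matrix is itself bistochastic (its row and column sums equal one by $\sum_i P_i^{(k)} = I$ and cyclicity of the trace).

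For the Maassen--Uffink term, note that $r^{(\text{MU})}(\mathbb B,\mathbb B_0) = -\log \max_{ij} [X(\mathbb B,\mathbb B_0)]_{ij}$ and that $-\log$ is decreasing, so it suffices to prove $\max_{ij} C_{ij} \le \max_{ij} A_{ij}$. This is immediate from $C_{ij} = \sum_k M_{ik} A_{kj}$: since each row of $M$ is a probability vector ($M_{ik}\ge 0$, $\sum_k M_{ik}=1$), every entry $C_{ij}$ is a convex combination of the entries in the $j$-th column of $A$ and hence cannot exceed $\max_{kl} A_{kl}$. Taking the maximum over $i,j$ and applying $-\log$ yields $r^{(\text{MU})}(\mathbb B_1,\mathbb B_0) \le r^{(\text{MU})}(\mathbb B_2,\mathbb B_0)$.

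For the fluctuation bound, I would first rewrite $q(\mathbb B,\mathbb B_0) = 1 - \sigma_{\min}^2(X(\mathbb B,\mathbb B_0))$, using that the eigenvalues of $XX^T$ are the squared singular values of $X$; the claim is then equivalent to $\sigma_{\min}(MA) \le \sigma_{\min}(A)$. The key ingredient is that a bistochastic matrix has operator norm at most one: by the Birkhoff--von Neumann theorem $M$ is a convex combination of permutation matrices, each of unit operator norm, and the operator norm is convex, so $\|M\| \le 1$. Letting $x^{\ast}$ be a unit vector with $\|A x^{\ast}\| = \sigma_{\min}(A)$ and invoking the variational characterization, $\sigma_{\min}(MA) = \min_{\|x\|=1}\|MAx\| \le \|MA x^{\ast}\| \le \|M\|\,\|A x^{\ast}\| = \sigma_{\min}(A)$. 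Squaring and subtracting from $1$ reverses the inequality and gives $q(\mathbb B_1,\mathbb B_0) \le q(\mathbb B_2,\mathbb B_0)$.

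The part I expect to be most delicate is this last singular-value estimate, because $\sigma_{\min}$ is a \emph{minimum} over the unit sphere and submultiplicativity of the operator norm does not directly control a smallest singular value; the resolution is to test the variational problem for $MA$ with the optimizer $x^{\ast}$ of $A$ rather than re-optimizing, which is precisely what converts $\|M\|\le 1$ into the desired bound. I would also double-check the normalization that each $X$ is genuinely bistochastic, so that $\sigma_{\max}\le 1$ and the $-\log$ argument stays in the expected range.
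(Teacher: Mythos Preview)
Your proof is correct. For the Maassen--Uffink part it is essentially identical to the paper's argument: both observe that each entry of $C=MA$ is a convex combination of entries of $A$ (because each row of $M$ is a probability vector), so $\max_{ij}C_{ij}\le\max_{ij}A_{ij}$, and then apply monotonicity of $-\log$.

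For the $q$-inequality the two proofs diverge in the key lemma used to obtain $\sigma_{\min}(MA)\le\sigma_{\min}(A)$. The paper invokes the Gel'fand--Naimark inequality (from Bhatia's matrix analysis text), which in the special case $k=1$, $i_1=d$ reads $s_d(MA)\le s_1(M)\,s_d(A)$, and then notes $s_1(M)=1$ for bistochastic $M$. You instead use the variational characterization $\sigma_{\min}(MA)=\min_{\|x\|=1}\|MAx\|$, test it at the minimizer $x^\ast$ of $A$, and bound $\|MAx^\ast\|\le\|M\|\,\|Ax^\ast\|$ with $\|M\|\le 1$ established via Birkhoff--von Neumann and convexity of the operator norm. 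Your route is more elementary and self-contained --- it avoids importing a named majorization inequality and makes the mechanism (operator-norm contractivity of bistochastic matrices) transparent. The paper's route, on the other hand, situates the fact inside a broader singular-value majorization framework, which may be natural given the majorization theme of the article. Either way, the essential input is the same: $\|M\|_{\mathrm{op}}\le 1$ for bistochastic $M$.
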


\prlsection{Generalized measurements} The ordering \po can be directly extended to include generalized measurements described by POVMs. Consider a state $\rho_0 = \sum_i p_i P_i ^{(0)} \in \mathcal A_{\mathbb B_0}$ and a measurement $\mathbb F = \{ F_i \}_{i}$. The probability distribution of possible outcomes is $\bs p_{\mathbb F}(\rho_0) = X(\mathbb F, \mathbb B_0) \bs p$, where now $\left[X(\mathbb F, \mathbb B_0) \right]_{ij} \coloneqq \Tr ( F_i P_j ^{(0)}  )$ is just column stochastic \footnote{The POVMs are allowed have arbitrary number of elements. If this number is different, it is understood that the POVM with the least number of elements is padded with zeros until the cardinality of the sets becomes equal, so that the (rectangular) matrices $X(\mathbb F,\mathbb B_0)$ and $X(\mathbb G,\mathbb B_0)$ have equal dimensions.}. The analogous ordering over POVMs $\mathbb F$ and $\mathbb G$ relative to a basis $\mathbb B_0$ can be defined as $\mathbb F \succ^{\mathbb B_0} \mathbb G$ if and only if there exists a bistochastic $M$ such that $X(\mathbb G , \mathbb B_0) = M X(\mathbb F , \mathbb B_0)$.
In fact, the family $\{f^{\phi}_{\mathbb B_0}(\mathbb F) \coloneqq \sum_{i=1}^d \phi(X_i^R(\mathbb F,\mathbb B_0))  \}_\phi$ for all continuous convex $\phi$ still forms a complete family of monotones for the ordering \po, now considered over POVMs.

However, in contrast with the orthogonal measurement case, now it does not hold that $\bs p_{\mathbb B_0} (\rho_0) \succ \bs p _{\mathbb F} (\rho_0) $ for all $\mathbb F$, namely generalized measurements can ``purify'' the initial probability distribution \footnote{As, for example, with $\mathbb F = \{ I,0,\dots,0\}$.}. For this reason, we consider as the appropriate meaningful generalization of ``incompatibility relative to a basis'' to POVMs the less restraining ordering that occurs by relaxing the constraint of bistochasticity  on the matrix $M$, and instead requiring only column stochasticity. In this case, if $\mathbb F$ lies ``higher'' in the ordering than $\mathbb G$, then $\bs p _{\mathbb G} (\rho_0)$ can be obtained by probabilistic post-processing (not necessarily a uniforming one) from $\bs p _{\mathbb F} (\rho_0)$, independently of $\rho_0 \in \mathcal A_{\mathbb B_0}$.

\begin{definition2}
\label{th:maindefinitiogeneral}
We denote $\majj{F}{G}$ if and only if there exists a stochastic matrix $M$ such that $X(\mathbb G , \mathbb B_0) = M X(\mathbb F , \mathbb B_0)$.
\end{definition2}

The ordering is a preorder and clearly $\mathbb F  \succ ^{\mathbb B_0} \mathbb G  \Longrightarrow \majj{F}{G}$. As such, the corresponding monotones for \poo are related to Eq.~\eqref{eq:measures_convex}. The following is a direct implication of a result by Alberti \textit{et al.}~\cite{alberti1982wachsende} (see also \cite{alberti2008order}).

\begin{restatable}{prop}{convexmonotones2}
\label{th:convexmonotones2}
Let $\psi : \mathbb R ^d \to \mathbb R$ be a function that is simultaneously convex and homogeneous in all its arguments. Then,
\begin{align}
	g^{\psi}_{\mathbb B_0}(\mathbb F) \coloneqq \sum_{i} \psi(X_i^R(\mathbb F,\mathbb B_0))  \label{eq:measures_convex2}
\end{align}
is a monotone over \poo, i.e., ${\majj{F}{G}} \Longrightarrow g^{\psi}_{\mathbb B_0}(\mathbb F) \ge g^{\psi}_{\mathbb B_0}(\mathbb G)$; here, $X_i^R$ stand for the row vectors of the matrix $X_{ij}$. Moreover, the family $\{ g^{\psi}_{\mathbb B_0}(\mathbb F)  \}_\psi$ forms a complete set of monotones for \poo.
\end{restatable}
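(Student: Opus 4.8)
The plan is to treat the two claims separately: first the monotonicity of each $g^{\psi}_{\mathbb B_0}$, which follows from a short direct computation, and then completeness, which is the substantive part and where the cited result of Alberti \textit{et al.} enters. Throughout I abbreviate $A \coloneqq X(\mathbb F,\mathbb B_0)$ and $C \coloneqq X(\mathbb G,\mathbb B_0)$, and write $a_k, c_i$ for their row vectors in $\mathbb R^d$.

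For monotonicity I would first record that a function that is simultaneously convex and positively homogeneous of degree one is sublinear, hence subadditive: convexity gives $\psi(\tfrac{x+y}{2}) \le \tfrac12(\psi(x)+\psi(y))$, and homogeneity turns the left-hand side into $\tfrac12\psi(x+y)$, so $\psi(x+y) \le \psi(x)+\psi(y)$. Assuming $\majj{F}{G}$, there is a column-stochastic $M$ with $C = MA$, so each output row is $c_i = \sum_k M_{ik}\, a_k$ with $M_{ik}\ge 0$ and $\sum_i M_{ik} = 1$. Sublinearity (subadditivity together with positive homogeneity, using $M_{ik}\ge 0$) then yields
\begin{align}
\sum_i \psi(c_i) = \sum_i \psi\!\Big(\sum_k M_{ik}\, a_k\Big) \le \sum_i \sum_k M_{ik}\,\psi(a_k) = \sum_k \psi(a_k)\,,
\end{align}
where the final equality uses column-stochasticity $\sum_i M_{ik} = 1$. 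This is precisely $g^{\psi}_{\mathbb B_0}(\mathbb F) \ge g^{\psi}_{\mathbb B_0}(\mathbb G)$, establishing that each $g^{\psi}_{\mathbb B_0}$ in Eq.~\eqref{eq:measures_convex2} is a monotone. (Degree one is essential here; higher-degree homogeneity destroys subadditivity.)

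For completeness I would argue by separation. Fix $A$ and consider $\mathcal P \coloneqq \{ MA : M \text{ column-stochastic}\}$, which is compact and convex, being the linear image of the compact convex set of column-stochastic matrices of the appropriate (padded) dimensions. By definition $\majj{F}{G}$ holds iff $C \in \mathcal P$. Suppose it fails; then $C$ can be strictly separated from $\mathcal P$, i.e., there is a matrix $L$ with rows $L_i$ such that $\sum_i \langle L_i, c_i\rangle = \langle L, C\rangle > \max_{C' \in \mathcal P}\langle L, C'\rangle$. Because the maximization over column-stochastic $M$ decouples across columns, each column being an independent point of a simplex, one finds $\max_{C'\in\mathcal P}\langle L,C'\rangle = \sum_k \max_i \langle L_i, a_k\rangle$. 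I would then take $\psi(x) \coloneqq \max_i \langle L_i, x\rangle$, a maximum of linear functionals and hence both convex and positively homogeneous, so admissible in Eq.~\eqref{eq:measures_convex2}. Using $\langle L_i, c_i\rangle \le \psi(c_i)$ on the left, the separation inequality becomes $\sum_i \psi(c_i) \ge \sum_i \langle L_i, c_i\rangle > \sum_k \psi(a_k)$, i.e., $g^{\psi}_{\mathbb B_0}(\mathbb G) > g^{\psi}_{\mathbb B_0}(\mathbb F)$. Contrapositively, joint monotonicity of the whole family forces $C \in \mathcal P$, i.e., $\majj{F}{G}$, which is exactly the content of the theorem of Alberti \textit{et al.}

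The main obstacle is this completeness direction, and within it the key point is that the separating hyperplane must be recast as an admissible test function. The functional $L$ produced by separation is not itself of the form $\sum_i \psi(\text{row}_i)$, and the nontrivial observation is that collapsing the per-column maximization over $M$ converts the support function of $\mathcal P$ into $\sum_k \max_i \langle L_i, a_k\rangle$, which is precisely $\sum_k \psi(a_k)$ for the convex, homogeneous $\psi = \max_i \langle L_i, \cdot\rangle$; the elementary bound $\langle L_i, c_i\rangle \le \psi(c_i)$ then closes the gap on the output side. Care is also needed with the padding convention so that $A$ and $C$ share a common, fixed number of rows, ensuring that $\mathcal P$ lives in a fixed matrix space and that the compactness and strict-separation step is legitimate.
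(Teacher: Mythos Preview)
Your proof is correct. The paper does not actually supply its own argument for this proposition; it simply records the statement as ``a direct implication of a result by Alberti \textit{et al.}'' and gives no proof in the Appendix. What you have written is a clean, self-contained rederivation of that cited result in the present setting.

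A few remarks on the comparison. For monotonicity, your route via sublinearity (convex $+$ positively homogeneous $\Rightarrow$ subadditive) is the natural one here, and it is slightly sharper than simply mimicking the bistochastic case: the point, which you isolate correctly, is that only column-stochasticity of $M$ is available, so one cannot appeal to convexity alone but must use homogeneity to pull the nonnegative weights $M_{ik}$ outside before summing over $i$. Your parenthetical that degree one is essential is exactly right and clarifies what ``homogeneous'' must mean for the statement to hold. For completeness, your separating-hyperplane argument is standard and sound: the identification $\max_{C'\in\mathcal P}\langle L,C'\rangle=\sum_k\max_i\langle L_i,a_k\rangle$ via column-wise decoupling, followed by the choice $\psi(x)=\max_i\langle L_i,x\rangle$, is precisely the mechanism underlying the Alberti--Uhlmann theorem, so you have in effect reproved the cited result rather than merely invoking it. Your handling of the padding convention, ensuring $A$ and $C$ live in a common matrix space so that $\mathcal P$ is compact and strict separation applies, is the only technical point needing care, and you flag it appropriately.
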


\prlsection{Basis-independent incompatibility} Finally, we connect the orderings describing measurement incompatibility relative to a basis with the notion of a \textit{parent measurement} \cite{heinosaari2016invitation,skrzypczyk2019all}. In this context, $\mathbb F$ is called a parent of $\mathbb G$ if there exists a stochastic $M$ such that $G_i = \sum_j M_{ij} F_j$ $\forall i$, while a family of measurements are jointly measurable if they admit a common parent.

\begin{restatable}{prop}{allbases}
\label{th:allbases}
$\mathbb F$ is a parent of $\mathbb G$ if and only if $\majj{F}{G}$ for all $\mathbb B_0 \in \mathcal M (\mathcal H)$ and the post-processing matrix $M$ can be chosen to be the same for all $\mathbb B_0$.
\end{restatable}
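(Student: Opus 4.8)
The plan is to translate the matrix condition defining \poo into a statement about the diagonal matrix elements of a single operator, and then exploit the fact that ranging over all bases $\mathbb B_0$ amounts to ranging over all pure states. First I would note that, since $P_j^{(0)} = \ket{j}\!\bra{j}$ for the kets of $\mathbb B_0$, the entries read $[X(\mathbb G,\mathbb B_0)]_{ij} = \bra{j} G_i \ket{j}$ and likewise for $\mathbb F$. Hence, for a \emph{fixed} stochastic matrix $M$, the identity $X(\mathbb G,\mathbb B_0) = M X(\mathbb F,\mathbb B_0)$ is equivalent to
\begin{align}
\bra{j} \Big( G_i - \sum_k M_{ik} F_k \Big) \ket{j} = 0 \quad \forall i,j \,,
\end{align}
i.e.\ to the vanishing, in the basis $\mathbb B_0$, of all diagonal elements of the Hermitian operators $A_i \coloneqq G_i - \sum_k M_{ik} F_k$. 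These are Hermitian because the $F_k,G_i$ are POVM elements and the $M_{ik}$ are real.

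The forward implication is then immediate: if $\mathbb F$ is a parent of $\mathbb G$ with stochastic $M$, then $A_i = 0$ for every $i$, so its diagonal vanishes in every basis $\mathbb B_0$, and the very same $M$ witnesses $\majj{F}{G}$ for all $\mathbb B_0$.

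For the converse I would use that any unit vector $\ket{\psi}$ can be completed to an orthonormal basis $\mathbb B_0$. Thus the hypothesis---that a single $M$ makes the diagonal of each $A_i$ vanish in \emph{every} basis---is equivalent to $\bra{\psi} A_i \ket{\psi} = 0$ for all normalized $\ket{\psi}$ and all $i$. Since $A_i$ is Hermitian, a vanishing quadratic form forces $A_i = 0$ (e.g.\ by diagonalizing $A_i$ and evaluating the form on its eigenvectors, or by polarization). Therefore $G_i = \sum_k M_{ik} F_k$ with the stochastic $M$, which is precisely the statement that $\mathbb F$ is a parent of $\mathbb G$.

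The main obstacle is conceptual rather than computational: the crux is recognizing that quantifying over all reference bases $\mathbb B_0$ upgrades a family of basis-local (diagonal) constraints into a basis-independent operator identity, and that insisting on a common $M$ across all bases is exactly what is needed to pass from $\bra{\psi} A_i \ket{\psi}=0$ to $A_i=0$. A minor point to verify is consistency with the zero-padding convention for POVMs of unequal cardinality, but this does not affect the argument, since appended zero operators contribute trivially to every $A_i$.
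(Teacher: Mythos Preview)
Your proof is correct and follows essentially the same route as the paper's: both rewrite $X(\mathbb G,\mathbb B_0)=M X(\mathbb F,\mathbb B_0)$ as the vanishing of the $\mathbb B_0$-diagonal of the Hermitian operators $G_i-\sum_k M_{ik}F_k$, and then argue that ranging over all $\mathbb B_0$ forces equal expectation values on every pure state, hence operator equality. Your presentation is slightly cleaner (explicitly introducing $A_i$ and invoking completion of an arbitrary unit vector to a basis), but the argument is the same.
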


\prlsection{Conclusions} Quantum resource theories seem to suggest that an appropriate quantification of quantum properties, even conceptually simple ones such as the ``uniformity'' of a state \cite{Gour_uniformity}, cannot be achieved by means of a single scalar quantifier. Instead, only an infinite set of functions is able to capture such properties in their wholeness, as they naturally result out of preorders. In this work, we defined an operationally motivated preorders over quantum measurements that capture a notion of incompatibility relative to a basis. Our approach uncovers a quantitative, as well as conceptual, connection between incompatibility, uncertainty relations and quantum coherence unified under the prism of multivariate majorization.

\acknowledgements

\prlsection{Acknowledgements} G.S. is thankful to N.A. Rodr\'{i}guez-Briones for helpful discussions and acknowledges financial support from a University of Southern California ``Myronis'' fellowship. P.Z. acknowledges partial support from the NSF award PHY-1819189.

\Urlmuskip=0mu plus 1mu\relax
\bibliography{refs}

\appendix
\widetext

\section{Appendix: Proofs}

\begin{proposition*}
\begin{inparaenum}[(i)]
\item The binary relation \po over $\mathcal M (\mathcal H)$ is a preorder, i.e., $\maj{}{}$ $\forall \,\mathbb B$ (reflexivity) and  $\maj{1}{2}$, $\maj{2}{3}$ $\Longrightarrow$ $\maj{1}{3}$ (transitivity).
\item $\maj{0}{}$ for all bases $\mathbb B$ (``measurement over $\mathbb B_0$ is more compatible than over any other basis'')
\item  $\mathbb B \succ ^{ \mathbb B_0} \mathbb B_{\MU}$ for all bases $\mathbb B$, where $\mathbb B_{\MU}$ is any basis mutually unbiased to $\mathbb B_0$ (``measurement over any basis is more compatible than over a mutually unbiased one'').
\end{inparaenum}
\end{proposition*}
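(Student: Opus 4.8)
The plan is to prove all three parts directly from Definition~\ref{th:maindefinitionnew} by exhibiting, in each case, an explicit bistochastic matrix $M$ satisfying the required relation. I would first establish three elementary facts and reuse them throughout: (a) that $X(\mathbb B_0,\mathbb B_0)=I$, which follows from the orthonormality of the rank-1 projectors since $\Tr(P_i^{(0)}P_j^{(0)})=\delta_{ij}$; (b) that $X(\mathbb B,\mathbb B_0)$ is bistochastic for every basis $\mathbb B$, as already noted in the Preliminaries; and (c) that the product of two bistochastic matrices is again bistochastic.

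For reflexivity I would take $M=I$, so that trivially $X(\mathbb B,\mathbb B_0)=I\,X(\mathbb B,\mathbb B_0)$ with $I$ bistochastic. For transitivity, given bistochastic $M_1,M_2$ with $X(\mathbb B_2,\mathbb B_0)=M_1 X(\mathbb B_1,\mathbb B_0)$ and $X(\mathbb B_3,\mathbb B_0)=M_2 X(\mathbb B_2,\mathbb B_0)$, I would compose to get $X(\mathbb B_3,\mathbb B_0)=(M_2M_1)X(\mathbb B_1,\mathbb B_0)$ and invoke (c) to conclude that $M_2M_1$ is bistochastic. Part (ii) then follows immediately from (a): the relation $X(\mathbb B,\mathbb B_0)=M\,X(\mathbb B_0,\mathbb B_0)=M$ is solved by $M=X(\mathbb B,\mathbb B_0)$, which is bistochastic by (b).

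Part (iii) is the one mild point requiring thought. I would first record that for a basis $\mathbb B_{\MU}$ mutually unbiased to $\mathbb B_0$ one has $[X(\mathbb B_{\MU},\mathbb B_0)]_{ij}=\Tr(P_i^{(\MU)}P_j^{(0)})=1/d$ for all $i,j$ by the defining property of mutual unbiasedness, i.e.\ $X(\mathbb B_{\MU},\mathbb B_0)=\frac1d J$ with $J$ the all-ones matrix. The task is then to find a bistochastic $M$ with $\frac1d J=M\,X(\mathbb B,\mathbb B_0)$. The natural candidate is $M=\frac1d J$, which is bistochastic; the verification reduces to observing that $\frac1d J\,X(\mathbb B,\mathbb B_0)=\frac1d J$, which holds precisely because the columns of $X(\mathbb B,\mathbb B_0)$ sum to unity (column-stochasticity). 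The only obstacle worth flagging is the bookkeeping with this column-sum identity and confirming that each constructed $M$ is genuinely bistochastic; no deeper difficulty arises, as every assertion reduces to the three reusable facts above.
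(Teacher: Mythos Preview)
Your proof is correct and follows the same approach as the paper's own proof: reflexivity via $M=I$, transitivity via the bistochasticity of a product of bistochastic matrices, part~(ii) via $X(\mathbb B_0,\mathbb B_0)=I$ with $M=X(\mathbb B,\mathbb B_0)$, and part~(iii) via the choice $M_{ij}=1/d$. The only difference is that you spell out a few more details, such as the column-sum verification in~(iii), which the paper leaves implicit.
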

\begin{proof}
\textbf{(i)} Reflexivity follows since $I$ is bistochastic and transitivity from the fact that a product of bistochastic matrices is also bistochastic.
\\ \\
\textbf{(ii)} Since $X(\mathbb B_0,\mathbb B_0) = I$, follows by setting $M = X(\mathbb B,\mathbb B_0)$
\\ \\
\textbf{(iii)} By definition, $\left[ X(\mathbb B_{\MU},\mathbb B_0) \right]_{ij} = 1/d$, hence follows by setting $M _{ij} = 1/d$.
\end{proof}

Let us now establish a helpful Lemma. We remind the reader that a bistochastic matrix $A_{ij}$ is \textit{unistochastic} \cite{bengtsson2017geometry} if there exists a unitary matrix $U_{ij}$ such that $A_{ij} = | U_{ij} |^2$.

\begin{lemma}
Every bistochastic matrix can be approximated arbitrarily well by a product of unistochastic matrices.
\end{lemma}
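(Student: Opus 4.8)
The plan is to reduce the statement to two facts: a decomposition of an arbitrary bistochastic matrix into elementary factors, and the observation that those factors are unistochastic. First I would invoke the Birkhoff--von Neumann theorem to write any bistochastic matrix as a convex combination of permutation matrices, noting that each permutation matrix $P$ is trivially unistochastic, since it is real orthogonal and $|P_{ij}|^2 = P_{ij}$. Because arbitrary finite products are allowed, the natural elementary factor is a T-transform $T_{ab}(\lambda) = \lambda I + (1-\lambda) Q_{ab}$, which acts as a $2\times 2$ doubly stochastic block on the coordinate pair $(a,b)$ (here $Q_{ab}$ swaps $a$ and $b$) and as the identity on the complement.

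Next I would verify that every T-transform is itself unistochastic. This is an explicit $2\times 2$ computation: the block $\left(\begin{smallmatrix} \lambda & 1-\lambda \\ 1-\lambda & \lambda \end{smallmatrix}\right)$ equals the entrywise modulus-square of the real orthogonal matrix $O = \left(\begin{smallmatrix} \sqrt{\lambda} & \sqrt{1-\lambda} \\ \sqrt{1-\lambda} & -\sqrt{\lambda} \end{smallmatrix}\right)$, so the direct sum $O \oplus I$ is an orthogonal $d \times d$ matrix whose entrywise modulus-square is $T_{ab}(\lambda)$. Setting $\lambda = 0$ shows in particular that every transposition, hence (by Birkhoff) every permutation, is a finite product of T-transforms.

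The heart of the proof is then to show that finite products of T-transforms are dense in the set of bistochastic matrices; the Lemma follows at once, since each factor is unistochastic and a product of T-transforms is therefore a product of unistochastic matrices. I would establish density by a reachability argument carried out on the relative interior (the strictly positive bistochastic matrices, whose closure is the whole Birkhoff polytope): writing $G_{ab} = Q_{ab} - I$ for the generators, the first-order moves $T_{ab}(1-s) = I + s G_{ab}$ are symmetric, but a product of two of them already produces the non-symmetric second-order direction $G_{ab} G_{cd}$, and I would check that these, together with the $G_{ab}$, positively span the tangent space of zero-row-and-column-sum matrices, so that the forward-reachable set from the identity is dense. The main obstacle is precisely this density step, and its subtlety is that the T-transforms form a semigroup whose elements are not invertible among bistochastic matrices---the inverse of such a contraction is not doubly stochastic. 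One therefore cannot invoke the usual commutator/Lie-group (Chow--Rashevskii) argument, in which the bracket directions are synthesised from forward and backward flows; instead one must argue that the cone of directions accessible through genuine products alone already fills the tangent space, while separately controlling the approximation near the boundary, i.e.\ near the permutation vertices, where it is tightest. Alternatively, one may appeal directly to the corresponding density result established in the theory of matrix majorization \cite{marshall1979inequalities}.
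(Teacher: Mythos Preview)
Your proposal follows essentially the same route as the paper: both arguments reduce to (i) every T-transform is unistochastic (since every $2\times 2$ bistochastic matrix is), and (ii) finite products of T-transforms are dense in the Birkhoff polytope. The only difference is in how (ii) is justified. The paper simply cites the classical fact \cite{bhatia2013matrix} that any bistochastic matrix with \emph{strictly positive} entries is already a finite product of T-transforms, and then notes that such matrices are dense in the polytope; this sidesteps entirely the semigroup/reachability subtlety you correctly flag as delicate. Your fallback citation to the majorization literature is exactly what the paper does in spirit, so the Birkhoff--von Neumann opening and the tangent-cone heuristics can be dropped---they add length without actually closing the gap you yourself identify.
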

\begin{proof}

Assume $M$ is a bistochastic matrix such that $M_{ij}> 0 $ for all $i,j$. Then, $M$ can be expanded into a finite product of T-transform \cite{bhatia2013matrix}, which are unistochastic matrices. This is because T-transforms act non-trivially only on a 2-dimensional subspace and all bistochastic matrices in $d=2$ are unistochastic.

The set of bistochastic matrices forms a convex polytope and hence in any $\epsilon$-neighbourhood (as defined, e.g., by the $l_1$ norm) of a matrix $M$ that fails the element-wise positivity condition, there exists some $M'$ that fulfills it.
\end{proof}

\preorderdephasing*

\begin{proof}


Eq.~\eqref{eq_preorder_dephasing} holds if and only if the action of the LHS and the RHS on any $P_i^{(0)}$ coincide. This is because $\D{0}$ is a projector and hence the action is non-trivial only over the image $\image (\D{0}) = \Span \left\{ P_i^{(0)} \right\}_i$. We have,
\begin{subequations} \label{lhsrhs}
\begin{align}
	&\text{LHS: } \quad \D{2} \D{0}  P_i^{(0)} = \sum_{j} X_{ji}(\mathbb B_2,\mathbb B_0) P_j^{(2)}  \label{lhs}\\
	&\text{RHS: }\quad \mathcal U \left[ \prod _\alpha \mathcal D_{\mathbb B'_\alpha} \right]  \mathcal  D_{\mathbb B_1} \mathcal D_{\mathbb B_0} P_i^{(0)}  = \sum_{\{ j_\alpha\}}\left[ \prod_{\alpha = 1} ^{\alpha_{max}-1}  X_{j_{\alpha+1} \, j_\alpha}(\mathbb B'_{\alpha+1},\mathbb B'_{ \alpha}) \right] X_{j_1i}(\mathbb B_1,\mathbb B_0) \, \mathcal U \left(P_{j_{\alpha_{\max}}}^{(\alpha_{\max})} \right) \,\;. \label{rhs}
\end{align}
\end{subequations}
Notice, in addition, that an appropriate $\mathcal U$ for the two expressions to be equal should satisfy $\mathbb B_2 = \mathcal U (\mathbb B'_{\alpha_{\max}})$.
\\ \\
Let us first show sufficiency. If Eq.~\eqref{eq_preorder_dephasing} holds, then the expressions \eqref{lhsrhs} are equal and therefore one can directly see that Eq.~\eqref{Eq:prob_post_process} also holds for bistochastic $M = \prod _\alpha A^{(\alpha)}$, where $A^{(\alpha)} = X(\mathbb B'_{\alpha+1},\mathbb B'_{ \alpha})$.
\\ \\
We now prove necessity.
Assume $\maj{1}{2}$, hence there exists a bistochastic $M$ such that Eq.~\eqref{Eq:prob_post_process} holds. Now, with use of the Lemma, we decompose $M = \prod_\alpha A^{(\alpha)}$ into a product of unistochastic matrices. For all $A^{(\alpha)}$ there exist, by definition, unitary operators $\mathcal U^{(\alpha)}$ such that $A^{(\alpha)}_{ij} = \Tr \left( P_j^{(0)} \mathcal U^{(\alpha)} (P_i^{(0)})  \right)$ for all $i,j$, or equivalently, $A^{(\alpha)} = X(\mathcal U^{(\alpha)} (\mathbb B_0),\mathbb B_0)$. Now we show that Eq.~\eqref{eq_preorder_dephasing} also holds for a sequence of dephasing superoperators $\{ \mathcal D_{\mathbb B'_\alpha} \}_{\alpha=1}^{\alpha_{\max}}$ over the bases
\begin{align}
\mathbb B'_1 &= \mathcal W (\mathbb B_0) \\
\mathbb B'_\alpha & = \mathcal W \, \mathcal U^{(1)}  \mathcal U^{(2)} \dots \mathcal U^{(\alpha)} (\mathbb B_0)  \quad \text{for all } 1 \le \alpha \le \alpha_{\max} \,\;,
\end{align}
where $\mathcal W(\mathbb B_0) = \mathbb B_1$. To see that, fist notice that for any unitary superoperator $\mathcal V$ it holds that $X(\mathbb B_\alpha,\mathbb B_{\beta}) = X(\mathcal V (\mathbb B_\alpha), \mathcal V(\mathbb B_{\beta})) $. As a result, we can write
\begin{align*}
X(\mathbb B_2,\mathbb B_0) &= \left[ \prod_\alpha  A^{(\alpha)} \right] X(\mathbb B_1,\mathbb B_0) =\left[ \prod_\alpha X(\mathcal U^{(\alpha)} (\mathbb B_0),\mathbb B_0)\right]  X(\mathbb B_1,\mathbb B_0)   \\
& = \, \dotso \, X(\mathcal W  \mathcal U^{(1)} \mathcal U^{(2)} (\mathbb B_0),\mathcal \mathcal W  \mathcal U^{(1)} (\mathbb B_0)) \, X(\mathcal W  \mathcal U^{(1)}(\mathbb B_0),\mathcal W (\mathbb B_0)) \,X(\mathcal W (\mathbb B_0),\mathbb B_0) \\
&= X(\mathbb B'_{\alpha_{\max}},\mathbb B'_{\alpha_{\max}-1}) \, \dotso \, X(\mathbb B'_2,\mathbb B'_1)  \, X(\mathbb B'_1,\mathbb B_1) \, X(\mathbb B_1,\mathbb B_0) \,\;.
\end{align*}
Choosing $\mathcal U$ such that $\mathbb B_2 = \mathcal U (\mathbb B'_{\alpha_{\max}})$, the above equation implies that the expressions \eqref{lhsrhs} are equal and hence Eq.~\eqref{eq_preorder_dephasing} also holds for the described sequence of dephasing superoperators.

\end{proof}



\convexmonotones*

\begin{proof}
Let $\maj{1}{2}$. Then, there exists a bistochastic matrix $M$ such that $X(\mathbb B_2,\mathbb B_0) = M X(\mathbb B_1,\mathbb B_0)$. For any continuous convex function $\phi : \mathbb R ^d \to \mathbb R$,
\begin{align*}
	&\sum_{i=1}^d \phi \left( X^R_i(\mathbb B_2,\mathbb B_0) \right) = \sum_i \phi \left( \sum_k M_{ik} X^R_k(\mathbb B_1,\mathbb B_0) \right) \\
	& \le \sum_{i,k} M_{ik} \phi \left(  X^R_k(\mathbb B_1,\mathbb B_0) \right) =  \sum_i \phi \left( X^R_i(\mathbb B_1,\mathbb B_0) \right) \,\;.
\end{align*}
%

\end{proof}

\relcoherence*

\begin{proof}
\textbf{(i)} In \cite{PhysRevA.91.052120} (see also \cite{PhysRevLett.116.120404}) it was shown that $\ket{\psi}\!\bra{\psi}$ can be transformed to $\ket{\phi}\!\bra{\phi}$ via Incoherent Operations (in fact, Strictly Incoherent Operations) with respect to $\mathbb B_0$ if $\mathcal D_{\mathbb B_0} (\ket{\phi}\!\bra{\phi}) \succ \mathcal D_{\mathbb B_0} (\ket{\psi}\!\bra{\psi}) $.

From the assumption $\maj{1}{2}$, we have that $X^C_{j} (\mathbb B_1,\mathbb B_0) \succ X^C_{j} (\mathbb B_2,\mathbb B_0)$ $\forall j$, where $X^C_{j}$ denotes the $j$th column vector of $X$. We can rewrite
\begin{align*}
[X^C_{j} (\mathbb B_1,\mathbb B_0)]_i  &= \Tr ( P_i^{(1)} P_j^{(0)}  ) = \Tr ( \mathcal U^\dagger (P_i^{(0)} ) P_j^{(0)}  ) \\
& = \Tr (  P_i^{(0)} \mathcal U  (P_j^{(0)} ) )
\end{align*}
and similarly
\begin{align*}
[X^C_{j} (\mathbb B_2,\mathbb B_0)]_i =  \Tr (  P_i^{(0)} \mathcal V  (P_j^{(0)} ) ) \,\;.
\end{align*}
Now, we can write the relation $X^C_{j} (\mathbb B_1,\mathbb B_0) \succ X^C_{j} (\mathbb B_2,\mathbb B_0)$ $\forall j$ in operator notation as
\begin{align*}
\D{0}( \mathcal U  (P_j^{(0)} )) \succ \D{0}( \mathcal V  (P_j^{(0)} )) \quad  \forall j
\end{align*}
from which convertibility follows.
\\ \\ 
\textbf{(ii)} Let us begin with the relative entropy of coherence. We have
\begin{align*}
\cre{1} = S(\rho_0 \, \| \, \D{1} \rho_0) &= - S(\rho_0) - \Tr \left( \rho_0 \log [ \D{1}( \rho_0)] \right) \\ 
&= - S(\rho_0) - \Tr \left( \D{1}(\rho_0) \log [ \D{1} (\rho_0)] \right) \\ 
&= S(\D{1}(\rho_0)) - S(\rho_0) \,\;.
\end{align*}
Since von Neumann entropy is a Schur-concave function, the assumption $\maj{1}{2}$ implies $S(\D{1}(\rho_0)) \le S(\D{2}(\rho_0))$ from which the claim follows.
\\ \\
In the following, we use the operator 2-norm $\left\|X \right\|_2 \coloneqq \sqrt{\Tr \left( X^\dagger X \right) }$. We have,
\begin{align*}
c^{(2)}_{\mathbb B_2} (\rho_0) &= \left\|  (\mathcal I - \D{2} ) \rho_0 \right\|^2_2 = \left\| \rho_0 \right\|_2^2 - \left\| \D{2} \rho_0 \right\|^2_2 = \left\| \rho_0 \right\|_2^2 - \left\|\mathcal U \left(\prod_\alpha \D{\alpha} \right) \D{1} \rho_0 \right\|^2_2  \\
&\ge \left\| \rho_0 \right\|_2^2 - \left\| \D{1} \rho_0 \right\|^2_2 = c^{(2)}_{\mathbb B_1} (\rho_0)  \,\;.
\end{align*}
The inequality follows since the 2-norm is submultiplicative over unital CPTP maps.
\end{proof}

\begin{proposition*}
$Q_{\mathbb B_0}(\mathbb B_1) \le q(\mathbb B_1,\mathbb  B_0)$.
\end{proposition*}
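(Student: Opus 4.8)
The plan is to reduce the variational quantity $Q_{\mathbb B_0}(\mathbb B_1)$ to a single matrix eigenvalue problem. First I would parametrize a generic element of the abelian algebra as $A = \sum_k a_k P_k^{(1)}$ with $a_k \in \mathbb R$; since the $P_k^{(1)}$ are orthogonal rank-1 projectors, the normalization $\left\| A \right\|_2 = 1$ is simply $\sum_k a_k^2 = 1$, i.e. $\bs a$ ranges over unit vectors in $\mathbb R^d$. Using $A^2 = \sum_k a_k^2 P_k^{(1)}$ together with $\Tr ( P_k^{(1)} P_i^{(0)} ) = X_{ki}(\mathbb B_1,\mathbb B_0)$, the variance becomes $\Var_i(A) = \sum_k X_{ki} a_k^2 - ( \sum_k X_{ki} a_k )^2$, which I would recast as a quadratic form $\Var_i(A) = \bs a^T ( D_i - \bs x_i \bs x_i^T ) \bs a$, where $\bs x_i$ denotes the $i$-th column of $X \coloneqq X(\mathbb B_1,\mathbb B_0)$ and $D_i \coloneqq \diag(\bs x_i)$. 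Bistochasticity guarantees that each column $\bs x_i$ is a probability vector, so this is literally the variance of the numbers $\{ a_k \}_k$ under the categorical law $\bs x_i$.

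The crux of the argument is a positivity-plus-summation observation. Each matrix $D_i - \bs x_i \bs x_i^T$ is the covariance matrix of the categorical distribution $\bs x_i$ and is therefore positive semidefinite. Moreover, exploiting the two stochasticity constraints separately, the row sums give $\sum_i D_i = I$ (the $k$-th diagonal entry of $\sum_i D_i$ is the $k$-th row sum of $X$), while $\sum_i \bs x_i \bs x_i^T = X X^T$ by the definition of matrix multiplication. Hence the positive-semidefinite summands telescope to $\sum_i ( D_i - \bs x_i \bs x_i^T ) = I - X X^T$, a decomposition of $I - X X^T$ indexed by the measurement outcomes of $\mathbb B_0$.

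From here the conclusion is immediate. Because every summand is positive semidefinite, any single one is dominated by the whole sum in the Loewner order, so for each fixed $i$ and each unit $\bs a$ one has $\Var_i(A) = \bs a^T ( D_i - \bs x_i \bs x_i^T ) \bs a \le \bs a^T ( I - X X^T ) \bs a$. Taking the maximum over $i$ and then the supremum over unit vectors $\bs a$ yields $Q_{\mathbb B_0}(\mathbb B_1) \le \lambda_{\max}(I - X X^T) = 1 - \lambda_{\min}(X X^T) = q(\mathbb B_1,\mathbb B_0)$, as claimed. I expect the only genuinely nontrivial step to be spotting the decomposition $I - X X^T = \sum_i ( D_i - \bs x_i \bs x_i^T )$ into positive-semidefinite pieces; once this identity is in hand, the semidefiniteness of each piece turns the seemingly outcome-dependent maximization $\max_i$ into the trivial bound by the total, and the remaining manipulations are routine.
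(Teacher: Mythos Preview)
Your proposal is correct and coincides with the paper's argument. The paper likewise writes $A=\sum_k a_k P_k^{(1)}$, replaces $\max_i \Var_i(A)$ by $\sum_i \Var_i(A)$ (valid since each term is nonnegative), evaluates that sum as $1-\|X^T(\mathbb B_1,\mathbb B_0)\,\bs a\|^2$, and then optimizes over unit $\bs a$; your Loewner-order step $D_i-\bs x_i\bs x_i^T \preceq \sum_j(D_j-\bs x_j\bs x_j^T)=I-XX^T$ is exactly this $\max_i\le\sum_i$ bound rewritten at the level of the underlying quadratic forms, so the two proofs are the same computation in slightly different dress.
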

\begin{proof}
One has for $A =\sum_k a_k P_k^{(1)}$,
\begin{align*}
\Var_i (A) &=  \Tr \left( P_i^{(0)} A^2 \right) - \left[ \Tr \left( P_i^{(0)} A \right) \right]^2 \\
 & = \sum_k  a_k^2 \Tr \left(  P_i^{(0)} P_k^{(1)} \right) 
  - \sum_{k,l} a_k a_l \Tr \left(  P_i^{(0)} P_k^{(1)} \right) \Tr \left(  P_i^{(0)} P_l^{(1)} \right) \,\;,
\end{align*}
hence
\begin{align*}
Q_{\mathbb B_0}(\mathbb B_1)& \le \sup_{A \in \mathcal A_{\mathbb B_1}, \left\|A  \right\|_2 = 1} \sum_i \Var_i (A) \\
& =  \sup_{A \in \mathcal A_{\mathbb B_1}, \left\|A  \right\|_2 = 1} \left( 1 - \left\| X^T(\mathbb B_1,\mathbb B_0) \,\bs a  \right\|^2 \right) \\
& \le 1 - \lambda_{\min} \left( X(\mathbb B_1,\mathbb B_0)X^T(\mathbb B_1,\mathbb B_0) \right)
\end{align*}
which is the desired bound.
\end{proof}

\bounds*

\begin{proof}
We begin with the first inequality. If $\maj{1}{2}$, then there exists a bistochastic matrix $M$ such that $X(\mathbb B_2,\mathbb B_0) = M X(\mathbb B_1,\mathbb B_0)$. We need to show that this implies $\lambda_{\min} \left( X(\mathbb B_1,\mathbb B_0)X^T(\mathbb B_1,\mathbb B_0) \right) \coloneqq s_d^2 \left( X(\mathbb B_1,\mathbb B_0)\right) $ ($s_d$ denotes the minimum singular value) satisfies $s_d\left( X(\mathbb B_1,\mathbb B_0)\right) \ge s_d\left( X(\mathbb B_2,\mathbb B_0)\right)$. Indeed, this is guaranteed by the Gel'fand-Naimark inequality which states that (for the singular values sorted in decreasing order) $\prod_{j=1}^k s_{i_j} (A B) \le \prod_{j=1}^k s_{j} (A) \prod_{j=1}^k s_{i_j} (B)$ for all $1 \le i_1 \le \dotso \le i_k \le n$ and $k=1,\dots,n$ (in our case we set $k=1$ and $i_1 = n$) \cite{bhatia2013matrix}. Notice that $s_1(M) = 1$ since $M$ is bistochastic.
\\ \\
For the second one, since $X(\mathbb B_2,\mathbb B_0) = M X(\mathbb B_1,\mathbb B_0)$ for bistochastic $M$, we have that $\max_{i,j} X_{ij} (\mathbb B_2,\mathbb B_0) \le \max_{i,j} X_{ij} (\mathbb B_1,\mathbb B_0)$. The result follows from the monotonicity of the $\log$ function. Symmetry follows from $X(\mathbb B_2,\mathbb B_1) = X^T (\mathbb B_1,\mathbb B_2)$.  
\end{proof}


\allbases*

\begin{proof}
We first rewrite the condition for $\majj{F}{G}$ in the following equivalent form.
\begin{subequations}
\begin{align}
X(\mathbb G , \mathbb B_0) &= M X(\mathbb F , \mathbb B_0) \quad \Longleftrightarrow \\
\Tr \left( G_i P^{(0)}_j \right) &= \Tr \left(\sum_k M_{ik} F_k P^{(0)}_j \right) \quad \forall i,j \quad \Longleftrightarrow  \label{Eq:app_second} \\
\D{0} \left( G_i  \right) &= \D{0} \left(\sum_k M_{ik} F_k \right) \quad \forall i \,\;. \label{Eq:app_third}
\end{align}
\end{subequations}
If $\mathbb F$ is a parent of $\mathbb G$ then Eq.~\eqref{Eq:app_third} holds for all $\mathbb B_0$ with $M$ that is independent of $\mathbb B_0$.
\\ \\
For the converse, let $\D{0} \left( G_i  \right) = \D{0} \left(\sum_k M_{ik} F_k \right)$ $\forall i $ and $\forall\, \mathbb B_0 \in \mathcal M (\mathcal H)$ with $M$ that is independent of $\mathbb B_0$. Since all for any two bases there is always a unitary superoperator connecting them, the $\mathbb B_0$ freedom amount to inserting an arbitrary unitary in Eq.\eqref{Eq:app_second} as
\begin{align*}
\Tr \left( G_i P^{(0)}_j \right)  &=  \Tr \left( \textstyle \sum_k \displaystyle M_{ik} F_k  P^{(0)}_j  \right) \quad \forall i,j  \,\;.
\end{align*}
Now we show that the above implies $G_i = \sum_k M_{ik} F_k$. Notice that both $G_i$ and $\sum_k M_{ik} F_k$ are non-negative operators, hence also Hermitian. The above equation forces the (Hermitian) $G_i$ and $\sum_k M_{ik} F_k$ to have the same expectation value over all pure states, hence $G_i = \sum_k M_{ik} F_k$ $\forall i$.

\end{proof}

\end{document}